\documentclass[a4paper,11pt]{article}

\usepackage[top=1in,bottom=1in,left=1in,right=1in]{geometry}
\usepackage{amsmath,amssymb,amsthm,mathrsfs}
\usepackage{microtype}
\usepackage{cite}
\usepackage{needspace}
\usepackage{hyperref}
\hypersetup{hidelinks,pdfauthor={Shangshuai Li},
  pdftitle={On equivalence between solution structures of the anti-self-dual Yang--Mills equations}}
\numberwithin{equation}{section}
\allowdisplaybreaks[2]
\theoremstyle{plain}
\newtheorem{theorem}{Theorem}[section]
\newtheorem{proposition}[theorem]{Proposition}

\newtheorem{corollary}[theorem]{Corollary}

\theoremstyle{definition}

\theoremstyle{remark}
\newtheorem{remark}[theorem]{Remark}

\newcommand{\Id}{I}
\newcommand{\Mat}{\operatorname{Mat}}
\newcommand{\GL}{\operatorname{GL}}

\newcommand{\SL}{\operatorname{SL}}

\newcommand{\tr}{\operatorname{tr}}
\newcommand{\diag}{\operatorname{diag}}

\title{%
	\fontsize{15}{20}\selectfont\bfseries 
	On equivalence between solution structures of the anti-self-dual Yang--Mills equations
}
\author{Shangshuai Li$^{1,2}$\thanks{Email: lishangshuai@nbu.edu.cn}
	\\[5pt]
	{\small $^1$ School of Mathematics and Statistics, Ningbo University, Ningbo 315211, China} \\
	{\small $^2$ Graduate School of Mathematics, Nagoya University, Nagoya 464-8602, Japan} 
}
\date{\today}

\begin{document}
\maketitle

\begin{abstract}
	We establish an exact correspondence between the quasi-Wronskian and quasi-Grammian constructions for the anti-self-dual Yang--Mills (ASDYM) equations with complex \(2\times2\) matrix-valued fields. Starting from vacuum wave functions associated with \(2M\) pairwise distinct nonzero constant spectral parameters, we restrict attention to a nonempty local domain on which the corresponding interpolation matrix is invertible. We show that both constructions are governed by the same uniquely solvable matrix-polynomial interpolation problem. In particular, the quasi-Wronskian Darboux polynomial \(T(\lambda)\) and the quasi-Grammian rational dressing matrix \(W(\lambda)\) satisfy
	\[
	T(\lambda)=q_\xi(\lambda)W(\lambda),
	\]
	where \(q_\xi(\lambda)\) is the monic scalar polynomial whose zeros are the prescribed poles of \(W(\lambda)\). This identity yields explicit relations between the corresponding Yang and potential matrices and shows that the two constructions generate identical ASDYM gauge potentials in the same gauge.
\end{abstract}

\medskip
\noindent\textbf{Keywords:}
Anti-self-dual Yang--Mills equations;
Darboux transformations;
quasideterminants;
Cauchy matrix approach;
matrix polynomial interpolation;
$\tau$-functions.

\smallskip
\noindent\textbf{Mathematics Subject Classification (2020):}
Primary 70S15; Secondary 37K10, 37K35, 15A15.

%\tableofcontents
%\newpage

\section{Introduction}

A characteristic feature of integrable systems is that the same family of exact solutions often admits several apparently different representations. A classical example is provided by the Kadomtsev--Petviashvili (KP) equation. Its dependent variable can be expressed in terms of a scalar $\tau$-function, under which the KP equation takes bilinear form. For $N$-soliton solutions, the $\tau$-function admits various determinant representations closely related to the Sato Grassmannian \cite{Sato1981,DateJimboKashiwaraMiwa1982,Kodama2017}. In particular, the Wronskian and Grammian constructions provide two classical determinant representations of KP solutions \cite{FreemanNimmo1983,Nakamura1989}. For KP line solitons, an explicit one-to-one correspondence between them has been established \cite{ChakravartyLewkowMaruno2010}.

Motivated by this correspondence in the KP theory, a natural question is whether an analogous exact correspondence exists for the anti-self-dual Yang--Mills (ASDYM) equations. The ASDYM equations constitute a matrix-valued integrable system of central importance in four-dimensional gauge theory and provide a unifying framework for many lower-dimensional integrable equations \cite{Ward1985,AblowitzChakravartyTakhtajan1993,MasonWoodhouse1996}. Their exact solutions have been constructed by a variety of methods, including instanton techniques, B\"acklund transformations, Riemann--Hilbert problem, bilinearization, and Darboux transformations \cite{BelavinPolyakovSchwarzTyupkin1975,AtiyahHitchinDrinfeldManin1978,UenoNakamura1982,SasaOhtaMatsukidaira1998,NimmoGilsonOhta2000}.

For our purpose, the most relevant constructions are the Wronskian-type solutions generated by Darboux transformations and the Grammian-type solutions arising from binary Darboux transformations or the Cauchy matrix approach \cite{NimmoGilsonOhta2000,LiQuYiZhang2022,LiQuZhang2023}. Since the relevant Darboux constructions involve noncommuting matrix-valued quantities, their closed-form solutions are naturally expressed in terms of quasideterminants and are commonly referred to as quasi-Wronskian and quasi-Grammian solutions \cite{GilsonHamanakaHuangNimmo2020,HamanakaHuang2020,LiHamanakaHuangZhang2025}.

Compared with the KP case, the connection between these two representations remains much less explicit. Previous studies have found agreement in their action densities, and more recently the Yang matrices of the general $N$-soliton solutions were shown to be asymptotically equivalent \cite{LiHamanakaHuangZhang2025,Huang2026}. A direct algebraic identification of the quasi-Wronskian and quasi-Grammian constructions remains to be established.

We establish an exact algebraic correspondence between the two
constructions for complex $2\times2$ ASDYM fields. The main result can be summarized as follows.

\paragraph{Main result.}
Let $\phi_i$ and $\psi_i$, $i=1,\ldots,M$, be
$\mathbb C^2$-valued vacuum wave functions associated with
$2M$ pairwise distinct nonzero constant spectral parameters
$\xi_i$ and $\gamma_i$, respectively.
Assume that the corresponding interpolation matrix $\mathcal V$, see \eqref{eq:mathcalV}, is invertible on a nonempty
local domain. Set
\[
q_\xi(\lambda)=\prod_{i=1}^{M}(\lambda-\xi_i),
\qquad
\Xi=\diag(\xi_1,\ldots,\xi_M).
\]
Let $T(\lambda)$ denote the quasi-Wronskian Darboux polynomial \cite{Cieslinski2009}
and $W(\lambda)$ the quasi-Grammian dressing matrix \cite{Takasaki2026}.
Using the subscripts $\mathrm W$ and $\mathrm G$ to distinguish
their associated determinant $\tau$-functions and Yang and
potential matrices, we obtain
\begin{subequations}
	\label{eq:intro-main-result}
	\begin{align}
		\tau_{\mathrm W}
		&=
		(-1)^{M(M-1)/2}
		\prod_{i,j=1}^{M}(\gamma_j-\xi_i)\,
		\tau_{\mathrm G},
		\label{eq:intro-tau-relation}
		\\
		T(\lambda)
		&=
		q_\xi(\lambda)W(\lambda), \qquad
		\mathcal J_{\mathrm W}
		=
		(-1)^M(\det\Xi)\mathcal J_{\mathrm G},
		\qquad
		\mathcal K_{\mathrm W}
		=
		\mathcal K_{\mathrm G}
		-(\tr\Xi)\Id_2.
		\label{eq:intro-field-relations}
	\end{align}
\end{subequations}
The determinant identity
\eqref{eq:intro-tau-relation} shows that the two determinant
representations have the same nonvanishing domain \cite[Eq.~(1.7)]{IshikawaOkadaTagawaZeng2006}.
On this common domain, the matrix identities \eqref{eq:intro-field-relations}
hold. Since the scalar factor relating the Yang matrices and
the additive shift relating the potential matrices are
independent of the spacetime coordinates, the two constructions
produce identical ASDYM gauge potentials in the gauge
$A_{\widetilde z}=A_{\widetilde w}=0$ and \eqref{eq:gauge-potential}.

\vspace{\baselineskip}

This paper is organized as follows. Section~\ref{sec-2} reviews the ASDYM equations, quasideterminants, and the spectral interpolation framework used throughout the paper. Section~\ref{sec-3} develops the quasi-Wronskian construction in terms of the Darboux matrix polynomial. Section~\ref{sec-4} introduces the quasi-Grammian construction through the Cauchy matrix approach and compares its dressing matrix with the Darboux matrix polynomial, thereby establishing the exact equivalence between the two constructions. Section~\ref{sec-5} concludes the paper with a summary of the main results and a brief discussion of possible further developments.

\section{Preliminaries}
\label{sec-2}

This section collects the basic ingredients needed for the comparison of the quasi-Wronskian and quasi-Grammian constructions. 

\subsection{The anti-self-dual Yang--Mills equations}

We first recall the formulation of the ASDYM equations used throughout this paper \cite{MasonWoodhouse1996}. Let \(\mathcal M\) be a four-dimensional complex manifold with local coordinates \((z,\widetilde z,w,\widetilde w)\), equipped with the flat metric
\[
ds^2=2\bigl(dz\,d\widetilde z-dw\,d\widetilde w\bigr).
\]
We take the gauge group to be \(G=\GL_2(\mathbb C)\), with Lie algebra \(\mathfrak g=\Mat_2(\mathbb C)\).
In the gauge
\begin{equation*}
A_{\widetilde z}=A_{\widetilde w}=0,
\end{equation*}
the ASDYM equations arise as the compatibility condition of the linear
system
\begin{equation}
	(\partial_z-\lambda\partial_{\widetilde w}+A_z)\Phi(\lambda)=0,
	\qquad
	(\partial_w-\lambda\partial_{\widetilde z}+A_w)\Phi(\lambda)=0,
	\label{eq:linear-system}
\end{equation}
where $\lambda$ is the spectral parameter.

The remaining gauge potentials admit two equivalent parametrizations,
\begin{equation}
	A_z
	=
	-(\partial_z\mathcal J)\mathcal J^{-1}
	=
	\partial_{\widetilde w}\mathcal K,
	\qquad
	A_w
	=
	-(\partial_w\mathcal J)\mathcal J^{-1}
	=
	\partial_{\widetilde z}\mathcal K.
	\label{eq:gauge-potential}
\end{equation}
These relations may be viewed as a Miura-type transformation between
the Yang matrix $\mathcal J$ and the potential matrix $\mathcal K$.
Substitution of the first parametrization into the compatibility
condition of \eqref{eq:linear-system} yields Yang's form of the ASDYM
equation \cite{Yang1977},
\begin{equation}
	\partial_{\widetilde z}
	\bigl((\partial_z\mathcal J)\mathcal J^{-1}\bigr)
	-
	\partial_{\widetilde w}
	\bigl((\partial_w\mathcal J)\mathcal J^{-1}\bigr)
	=0,
	\label{eq:J-ASDYM}
\end{equation}
whereas the second parametrization gives the potential form
\cite{Leznov1987},
\begin{equation}
	\partial_z\partial_{\widetilde z}\mathcal K
	-
	\partial_w\partial_{\widetilde w}\mathcal K
	+
	\bigl[
	\partial_{\widetilde w}\mathcal K,
	\partial_{\widetilde z}\mathcal K
	\bigr]
	=0.
	\label{eq:K-ASDYM}
\end{equation}
Thus, $\mathcal J$ and $\mathcal K$ provide two equivalent
parametrizations of the same ASDYM gauge field.

Accordingly, we distinguish between exact algebraic relations among
the solution matrices and equivalence at the level of the associated
ASDYM gauge field.

\begin{remark}
	For the comparison of different solution constructions, it is
important that these parametrizations are not uniquely normalized.
If $C\in\GL_2(\mathbb C)$ and
$C_0\in\Mat_2(\mathbb C)$ are constant matrices, then
\begin{equation*}
	\mathcal J\longmapsto\mathcal J C,
	\qquad
	\mathcal K\longmapsto\mathcal K+C_0
	\label{eq:JK-constant-freedom}
\end{equation*}
leave the gauge potentials in \eqref{eq:gauge-potential} unchanged.
Consequently, Yang matrices differing by a constant right factor, or
potential matrices differing by an additive constant matrix, determine
the same ASDYM gauge field. This normalization freedom will be used when comparing the different solution representations below.
\end{remark}

\subsection{Quasideterminants}

We next recall the quasideterminant notation used throughout the paper \cite{GelfandGelfandRetakh2005}.
Let
\begin{equation*}
	H=
	\begin{pmatrix}
		A&B\\
		C&D
	\end{pmatrix},
\end{equation*}
where the blocks have compatible sizes and $A$ is invertible. The
quasideterminant of $H$ expanded about the block $D$ is defined by
\begin{equation*}
	\left|
	\begin{array}{cc}
		A&B\\
		C&\boxed{D}
	\end{array}
	\right|
	:=
	D-CA^{-1}B.
	\label{eq:quasideterminant-definition}
\end{equation*}
Thus, in the block setting relevant here, a quasideterminant is precisely
the Schur complement of the complementary block $A$. The box specifies
the block with respect to which the quasideterminant is taken.

Quasideterminants provide a noncommutative analogue of determinant
ratios and are particularly suited to matrices whose entries are
themselves matrix-valued functions. In the constructions below, quasi-Wronskian and
quasi-Grammian solutions are naturally expressed in this notation.

\subsection{Spectral data and matrix polynomial interpolation}
\label{sec:spectral}

The purpose of this subsection is to recast the common spectral data of the quasi-Wronskian and quasi-Grammian constructions as a matrix-polynomial interpolation problem \cite{GohbergLancasterRodman1982,BallGohbergRodman1990}. 

For $M\geq1$, let
$
\{\xi_i\}_{i=1}^M,
$ and $
\{\gamma_i\}_{i=1}^M
$
be $2M$ pairwise distinct nonzero spectral parameters.
Associated with these spectral points, choose $\mathbb C^2$-valued
solutions of the vacuum linear system
\begin{equation}
	\partial_z\phi_i
	=
	\xi_i\partial_{\widetilde w}\phi_i,
	\qquad
	\partial_w\phi_i
	=
	\xi_i\partial_{\widetilde z}\phi_i,
	\qquad
	\partial_z\psi_i
	=
	\gamma_i\partial_{\widetilde w}\psi_i,
	\qquad
	\partial_w\psi_i
	=
	\gamma_i\partial_{\widetilde z}\psi_i,
	\label{eq:dispersion-relation}
\end{equation}
where $\phi_i$ corresponds to the spectral point $\lambda=\xi_i$, whereas $\psi_i$ corresponds to $\lambda=\gamma_i$. Equivalently,
\begin{equation*}
	\phi_i
	=
	\begin{pmatrix}
		\phi_i^{(1)}
		(\xi_i z+\widetilde w,\,
		\xi_i w+\widetilde z)
		\\[1mm]
		\phi_i^{(2)}
		(\xi_i z+\widetilde w,\,
		\xi_i w+\widetilde z)
	\end{pmatrix},
	\qquad
	\psi_i
	=
	\begin{pmatrix}
		\psi_i^{(1)}
		(\gamma_i z+\widetilde w,\,
		\gamma_i w+\widetilde z)
		\\[1mm]
		\psi_i^{(2)}
		(\gamma_i z+\widetilde w,\,
		\gamma_i w+\widetilde z)
	\end{pmatrix}.
	\label{eq:vacuum-wave-functions}
\end{equation*}
These spectral points together with the corresponding vacuum wave functions form the common input data for the quasi-Wronskian and quasi-Grammian constructions.

The pairs $(\xi_i,\phi_i)$ and $(\gamma_i,\psi_i)$ therefore provide right spectral interpolation data. We seek a monic $2\times2$ matrix polynomial satisfying the corresponding homogeneous right interpolation conditions
\begin{equation}
	P(\xi_i)\phi_i=0,
	\qquad
	P(\gamma_i)\psi_i=0,
	\qquad
	i=1,\ldots,M.
	\label{eq:spectral-interpolation-conditions}
\end{equation}
Accordingly, let
\begin{equation*}
	P(\lambda)
	=
	\lambda^M\Id_2
	+
	\sum_{k=0}^{M-1}P_k\lambda^k.
\end{equation*}
Writing each coefficient matrix explicitly as
\begin{equation*}
	P_k
	=
	\begin{pmatrix}
		p_{k,11} & p_{k,12}\\
		p_{k,21} & p_{k,22}
	\end{pmatrix},
	\qquad
	k=0,\ldots,M-1,
\end{equation*}
the $a$th component of each condition in
\eqref{eq:spectral-interpolation-conditions}, for a fixed row index $a=1,2$, takes the form
\begin{align*}
	\sum_{k=0}^{M-1}
	\xi_i^k
	\bigl(
	p_{k,a1}\phi_i^{(1)}
	+
	p_{k,a2}\phi_i^{(2)}
	\bigr)
	&=
	-\xi_i^M\phi_i^{(a)},
	\\
	\sum_{k=0}^{M-1}
	\gamma_i^k
	\bigl(
	p_{k,a1}\psi_i^{(1)}
	+
	p_{k,a2}\psi_i^{(2)}
	\bigr)
	&=
	-\gamma_i^M\psi_i^{(a)}.
\end{align*}
Collecting these $2M$ scalar equations gives the linear system
\begin{equation}
	\mathcal V
	\begin{pmatrix}
		p_{0,a1}\\
		\vdots\\
		p_{M-1,a1}\\
		p_{0,a2}\\
		\vdots\\
		p_{M-1,a2}
	\end{pmatrix}
	=
	-
	\begin{pmatrix}
		\xi_1^M\phi_1^{(a)}\\
		\vdots\\
		\xi_M^M\phi_M^{(a)}\\
		\gamma_1^M\psi_1^{(a)}\\
		\vdots\\
		\gamma_M^M\psi_M^{(a)}
	\end{pmatrix},
	\label{eq:interpolation-linear-system}
\end{equation}
where $\mathcal V$ is the corresponding interpolation matrix:
\begin{equation}
	\mathcal V
	=
	\begin{pmatrix}
		\phi_1^{(1)}
		&
		\xi_1\phi_1^{(1)}
		&
		\cdots
		&
		\xi_1^{M-1}\phi_1^{(1)}
		&
		\phi_1^{(2)}
		&
		\xi_1\phi_1^{(2)}
		&
		\cdots
		&
		\xi_1^{M-1}\phi_1^{(2)}
		\\
		\vdots
		&
		\vdots
		&
		&
		\vdots
		&
		\vdots
		&
		\vdots
		&
		&
		\vdots
		\\
		\phi_M^{(1)}
		&
		\xi_M\phi_M^{(1)}
		&
		\cdots
		&
		\xi_M^{M-1}\phi_M^{(1)}
		&
		\phi_M^{(2)}
		&
		\xi_M\phi_M^{(2)}
		&
		\cdots
		&
		\xi_M^{M-1}\phi_M^{(2)}
		\\[1mm]
		\psi_1^{(1)}
		&
		\gamma_1\psi_1^{(1)}
		&
		\cdots
		&
		\gamma_1^{M-1}\psi_1^{(1)}
		&
		\psi_1^{(2)}
		&
		\gamma_1\psi_1^{(2)}
		&
		\cdots
		&
		\gamma_1^{M-1}\psi_1^{(2)}
		\\
		\vdots
		&
		\vdots
		&
		&
		\vdots
		&
		\vdots
		&
		\vdots
		&
		&
		\vdots
		\\
		\psi_M^{(1)}
		&
		\gamma_M\psi_M^{(1)}
		&
		\cdots
		&
		\gamma_M^{M-1}\psi_M^{(1)}
		&
		\psi_M^{(2)}
		&
		\gamma_M\psi_M^{(2)}
		&
		\cdots
		&
		\gamma_M^{M-1}\psi_M^{(2)}
		\label{eq:mathcalV}
	\end{pmatrix}.
\end{equation}
Throughout this paper, we work locally on a domain where $\mathcal V$ is invertible. Consequently, the linear systems \eqref{eq:interpolation-linear-system}, for $a=1,2$, uniquely determine the coefficient matrices $P_k$ and hence the monic interpolation polynomial $P(\lambda)$.

\begin{remark}
	The matrix $\mathcal V$ admits a representation of the double-Wronskian type. Set
	\begin{equation*}
		f^{(a)}=(\phi_1^{(a)},\ldots,\phi_M^{(a)},\psi_1^{(a)},\ldots,\psi_M^{(a)})^T,
		\qquad
		\chi=\diag(\xi_1,\ldots,\xi_M,\gamma_1,\ldots,\gamma_M),
	\end{equation*}
	then
	\begin{equation*}
		\det\mathcal V=
		\det\left(f^{(1)},\chi f^{(1)},\ldots,\chi^{M-1} f^{(1)}; f^{(2)},\chi f^{(2)},\ldots,\chi^{M-1} f^{(2)}\right).
	\end{equation*}
	This is precisely of the double-Wronskian type appearing in the determinant construction of \cite{KakeiIkedaTakasaki2002}, where double-Wronskian solutions are constructed for a $(2+1)$-dimensional integrable hierarchy closely related to the self-dual Yang--Mills equations.
\end{remark}

\begin{proposition}
	\label{prop:determinant}
	Define the
	monic scalar polynomials
	\begin{equation}
		q_\xi(\lambda)
		=
		\prod_{i=1}^M(\lambda-\xi_i),
		\qquad
		q_\gamma(\lambda)
		=
		\prod_{i=1}^M(\lambda-\gamma_i),
		\qquad
		p(\lambda)
		=
		\prod_{i=1}^M
		(\lambda-\xi_i)(\lambda-\gamma_i).
		\label{eq:spectral-polynomials}
	\end{equation}
	Then the determinant of $P(\lambda)$ satisfies
	\begin{equation}
		\det P(\lambda)
		=
		p(\lambda)
		=
		q_\xi(\lambda)q_\gamma(\lambda).
		\label{eq:det-interpolation-polynomial}
	\end{equation}
\end{proposition}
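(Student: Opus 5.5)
Since \(P(\lambda)=\lambda^M\Id_2+\sum_{k<M}P_k\lambda^k\) is monic of degree \(M\) and \(2\times2\), its determinant \(\det P(\lambda)\) is a scalar polynomial of degree exactly \(2M\) with leading coefficient \(1\). The right-hand side \(p(\lambda)\) is also monic of degree \(2M\) and has \(2M\) pairwise distinct roots \(\xi_1,\ldots,\xi_M,\gamma_1,\ldots,\gamma_M\). It therefore suffices to show that \(\det P\) vanishes at each of these \(2M\) points. Then \(p\) divides \(\det P\); since both are monic of the same degree, they coincide. The factorization \(p=q_\xi q_\gamma\) is immediate from \eqref{eq:spectral-polynomials}.

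To show vanishing, we use the interpolation conditions \eqref{eq:spectral-interpolation-conditions}. We have \(P(\xi_i)\phi_i=0\) and \(P(\gamma_i)\psi_i=0\), so whenever \(\phi_i\neq0\) (resp. \(\psi_i\neq0\)) the matrix \(P(\xi_i)\) (resp. \(P(\gamma_i)\)) has a nontrivial kernel, and hence its determinant is zero. The only point requiring care is to confirm that none of the vectors \(\phi_i,\psi_i\) vanishes at the point under consideration. This follows from invertibility of \(\mathcal V\). The \(i\)-th row of \(\mathcal V\) in \eqref{eq:mathcalV} is \((\phi_i^{(1)}(1,\xi_i,\ldots,\xi_i^{M-1}),\,\phi_i^{(2)}(1,\xi_i,\ldots,\xi_i^{M-1}))\), which is identically zero if \(\phi_i=0\), contradicting \(\det\mathcal V\neq0\). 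The same argument applies to the rows containing \(\psi_i\). Thus, at each point of the local domain where \(\mathcal V\) is invertible, all \(2M\) wave-function vectors are nonzero.

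We then conclude pointwise on the domain. For fixed spacetime coordinates there, \(\lambda\mapsto\det P(\lambda)\) is a monic degree-\(2M\) polynomial vanishing at the \(2M\) distinct points \(\xi_i,\gamma_i\), so it equals \(p(\lambda)\). In particular, \(\det P(\lambda)\) is independent of the coordinates. The hypothesis that the spectral parameters are pairwise distinct is essential here, since it guarantees that these roots account for the full degree. If there is any subtle step, it is only the nonvanishing of the \(\phi_i,\psi_i\), and the row argument above handles it.
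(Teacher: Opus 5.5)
Your proposal is correct and follows essentially the same argument as the paper. Invertibility of $\mathcal V$ forces every $\phi_i$ and $\psi_i$ to be nonzero, so $\det P$ vanishes at the $2M$ distinct spectral points, and since it is monic of degree $2M$ it must equal $p(\lambda)$. Your row-of-$\mathcal V$ argument simply makes explicit the nonvanishing step that the paper asserts in one line.
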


\begin{proof}
	Since $\mathcal V$ is invertible, none of the vectors appearing in the
	interpolation data can vanish. Therefore
	\eqref{eq:spectral-interpolation-conditions} implies
	\[
	\det P(\xi_i)=0,
	\qquad
	\det P(\gamma_i)=0.
	\]
	Hence the $2M$ pairwise distinct spectral points
	$
	\xi_1,\ldots,\xi_M,\gamma_1,\ldots,\gamma_M
	$
	are zeros of $\det P(\lambda)$. On the other hand, since
	$P(\lambda)$ is monic of degree $M$, $\det P(\lambda)$ is a monic
	scalar polynomial of degree $2M$. It follows that
	\[
	\det P(\lambda)
	=
	\prod_{i=1}^M
	(\lambda-\xi_i)(\lambda-\gamma_i)
	=
	p(\lambda),
	\]
	which proves \eqref{eq:det-interpolation-polynomial}.
\end{proof}

\section{Quasi-Wronskian construction}
\label{sec-3}

The quasi-Wronskian solutions of the ASDYM equations arise from Darboux transformations of the associated linear system and admit compact quasideterminant representations of Wronskian type \cite{GilsonHamanakaHuangNimmo2020}. 
In this section, we encode the same spectral data in a monic Darboux matrix polynomial characterized by the interpolation conditions \eqref{eq:spectral-interpolation-conditions}. This polynomial determines the quasi-Wronskian solution of the ASDYM equations.

\subsection{Darboux matrix polynomial}

For $i=1,\ldots,M$, define
\begin{equation}
	\Theta_i=\left(\phi_i,\psi_i\right)=
	\begin{pmatrix}
		\phi_i^{(1)} & \psi_i^{(1)} \\
		\phi_i^{(2)} & \psi_i^{(2)}
	\end{pmatrix},
	\qquad
	\Lambda_i=\diag(\xi_i,\gamma_i),
	\qquad
	\Pi_k=\left(\Theta_1\Lambda_1^k,\ldots,\Theta_M\Lambda_M^k\right).
\end{equation}
The dispersion relations \eqref{eq:dispersion-relation} can then be written as
\begin{equation}
	\partial_z\Theta_i
	=(\partial_{\widetilde w}\Theta_i)\Lambda_i,
	\qquad
	\partial_w\Theta_i
	=(\partial_{\widetilde z}\Theta_i)\Lambda_i.
	\label{eq:qw-dispersion}
\end{equation}

Introduce the $2M\times2M$ matrix with $2\times2$ blocks
\begin{equation}
	\Sigma:=
	\begin{pmatrix}
		\Theta_1 & \cdots & \Theta_M\\
		\Theta_1\Lambda_1 & \cdots & \Theta_M\Lambda_M\\
		\vdots & & \vdots\\
		\Theta_1\Lambda_1^{M-1}
		& \cdots & \Theta_M\Lambda_M^{M-1}
	\end{pmatrix}
	=\begin{pmatrix}
		\Pi_0 \\
		\Pi_1 \\
		\vdots \\
		\Pi_{M-1}
	\end{pmatrix}.
	\label{eq:qw-block-matrix}
\end{equation}
Let \(V_\xi=(\xi_j^r)_{r=0,\ldots,M-1}^{j=1,\ldots,M}\) and \(V_\gamma=(\gamma_j^r)_{r=0,\ldots,M-1}^{j=1,\ldots,M}\) be the Vandermonde matrices.
Reordering the two components in the rows and columns of $\Sigma$ gives
permutations of the same parity. Therefore
\begin{equation}
	\det\Sigma
	=
	\det
	\begin{pmatrix}
		V_\xi\diag(\phi_1^{(1)},\ldots,\phi_M^{(1)})
		&
		V_\gamma\diag(\psi_1^{(1)},\ldots,\psi_M^{(1)})
		\\[2mm]
		V_\xi\diag(\phi_1^{(2)},\ldots,\phi_M^{(2)})
		&
		V_\gamma\diag(\psi_1^{(2)},\ldots,\psi_M^{(2)})
	\end{pmatrix}.
	\label{eq:step-det-sigma}
\end{equation}
The matrix on the right-hand side is the transpose of $\mathcal V$. Hence
$\det\Sigma=\det\mathcal V\neq0$, so $\Sigma$ is invertible on the
local domain under consideration.

Following the Darboux transformation and its
quasi-Wronskian representation
\cite{NimmoGilsonOhta2000,GilsonHamanakaHuangNimmo2020},
we define the associated monic Darboux matrix polynomial of degree $M$ by \cite{Cieslinski2009}
\begin{equation}
		T(\lambda):=
		\left|
		\begin{array}{cccc}
			\Theta_1 & \cdots & \Theta_M & \Id_2\\
			\Theta_1\Lambda_1 & \cdots & \Theta_M\Lambda_M
			& \lambda\Id_2\\
			\vdots & & \vdots & \vdots\\
			\Theta_1\Lambda_1^{M-1}
			& \cdots & \Theta_M\Lambda_M^{M-1}
			& \lambda^{M-1}\Id_2\\
			\Theta_1\Lambda_1^M
			& \cdots & \Theta_M\Lambda_M^M
			& \boxed{\lambda^M\Id_2}
		\end{array}
		\right|
		\quad
		=\left|
		\begin{array}{cc}
			\Sigma & E(\lambda)\\
			\Pi_M & \boxed{\lambda^M\Id_2}
		\end{array}
		\right|
		=\lambda^M\Id_2+\sum_{k=0}^{M-1}T_k\lambda^k.
	\label{eq:qw-Darboux-polynomial}
\end{equation}
Here $E(\lambda)=(\Id_2,\lambda\Id_2,\ldots,
\lambda^{M-1}\Id_2)^{T}$. Expanding the quasideterminant as a
Schur complement shows that the coefficient matrices satisfy
\begin{equation*}
	\Pi_M+	(T_0,\ldots,T_{M-1})\Sigma=0.
\end{equation*}
Reading this identity block column by block column gives
\begin{equation}
	\Theta_i\Lambda_i^M
	+
	\sum_{k=0}^{M-1}T_k\Theta_i\Lambda_i^k
	=0,
	\qquad i=1,\ldots,M.
	\label{eq:qw-annihilation}
\end{equation}
Its two columns are
\begin{equation*}
	\left(
	\xi_i^M\phi_i,
	\gamma_i^M\psi_i
	\right)
	+
	\sum_{k=0}^{M-1}
	\left(
	T_k\xi_i^k\phi_i,
	T_k\gamma_i^k\psi_i
	\right)
	=0.
\end{equation*}
Consequently, $T(\lambda)$ satisfies the interpolation conditions (cf. \eqref{eq:spectral-interpolation-conditions})
\begin{equation*}
	T(\xi_i)\phi_i=0,
	\qquad
	T(\gamma_i)\psi_i=0.
\end{equation*}
Since $T(\lambda)$ is monic and of degree $M$, the discussion in Sec.~\ref{sec:spectral}
implies
$
T(\lambda)=P(\lambda).
$
Thus the Darboux matrix polynomial is precisely the unique matrix
polynomial determined by the interpolation conditions \eqref{eq:spectral-interpolation-conditions}.

\subsection{Corresponding ASDYM fields}

Since $T(\lambda)=P(\lambda)$ and $T_0=T(0)$,
equation \eqref{eq:det-interpolation-polynomial} gives
\begin{align}
	\det T_0=\det T(0)=\det P(0)=p(0)=\prod_{i=1}^M
	\xi_i\gamma_i\neq0.
	\label{eq:det-T0}
\end{align}
Thus $T_0$ is invertible. The evolution of the Darboux matrix polynomial is
described by the following theorem.

\begin{theorem}
	\label{thm:qw-sato-wilson}
	The Darboux matrix polynomial
	$T(\lambda)$ satisfies
	\begin{subequations}
		\label{eq:qw-sato-wilson}
		\begin{align}
			\left(
			\partial_z-\lambda\partial_{\widetilde w}
			+\partial_{\widetilde w}T_{M-1}
			\right)T(\lambda)
			&=0,
			\label{eq:qw-sato-wilson-z}\\
			\left(
			\partial_w-\lambda\partial_{\widetilde z}
			+\partial_{\widetilde z}T_{M-1}
			\right)T(\lambda)
			&=0.
			\label{eq:qw-sato-wilson-w}
		\end{align}
	\end{subequations}
	Moreover, since $T_0$ is invertible by \eqref{eq:det-T0}, we have
	\begin{equation}
		-(\partial_zT_0)T_0^{-1}
		=\partial_{\widetilde w}T_{M-1},
		\qquad
		-(\partial_wT_0)T_0^{-1}
		=\partial_{\widetilde z}T_{M-1}.
		\label{eq:wronskian-JK}
	\end{equation}
\end{theorem}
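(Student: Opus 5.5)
The plan is to use the uniqueness of the monic interpolation polynomial from Sec.~\ref{sec:spectral}. Set
\[
L_z:=\partial_z-\lambda\partial_{\widetilde w}+\partial_{\widetilde w}T_{M-1},
\]
and consider the matrix polynomial $R(\lambda):=L_zT(\lambda)$, which is obtained by differentiating the coefficients of $T$. Since $T$ is monic of degree $M$, the $\lambda^{M+1}$ term $-\lambda\partial_{\widetilde w}(\lambda^M\Id_2)$ vanishes. The $\lambda^M$ coefficient is $-\partial_{\widetilde w}T_{M-1}+\partial_{\widetilde w}T_{M-1}=0$, and this is exactly why the potential $\partial_{\widetilde w}T_{M-1}$ is chosen. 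Hence $R(\lambda)$ has degree at most $M-1$, with coefficients
\[
R_k=\partial_zT_k-\partial_{\widetilde w}T_{k-1}+(\partial_{\widetilde w}T_{M-1})T_k,
\qquad T_{-1}:=0 .
\]

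Next I will show that $R$ satisfies the homogeneous interpolation conditions. Differentiate the annihilation identity \eqref{eq:qw-annihilation}, written compactly as $\sum_{k=0}^{M}T_k\Theta_i\Lambda_i^k=0$ with $T_M=\Id_2$, by $\partial_z$, and use \eqref{eq:qw-dispersion}, $\partial_z\Theta_i=(\partial_{\widetilde w}\Theta_i)\Lambda_i$. This gives
\[
\sum_k(\partial_zT_k)\Theta_i\Lambda_i^k+\sum_kT_k(\partial_{\widetilde w}\Theta_i)\Lambda_i^{k+1}=0.
\]
Now differentiate the same identity by $\partial_{\widetilde w}$ and right-multiply by $\Lambda_i$. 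Since $\Lambda_i$ is constant, this replaces the second sum by $-\sum_k(\partial_{\widetilde w}T_k)\Theta_i\Lambda_i^{k+1}$. Adding $(\partial_{\widetilde w}T_{M-1})\sum_kT_k\Theta_i\Lambda_i^k=0$ then yields $\sum_kR_k\Theta_i\Lambda_i^k=0$. Column by column, this says $R(\xi_i)\phi_i=0$ and $R(\gamma_i)\psi_i=0$ for all $i$.

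A polynomial of degree at most $M-1$ satisfying these $2M$ conditions has coefficient rows solving the homogeneous system $\mathcal V x=0$. This is \eqref{eq:interpolation-linear-system} with zero right-hand side, as for $P_k$ but without the monic term. Since $\mathcal V$ is invertible on the local domain, $R\equiv0$, which proves \eqref{eq:qw-sato-wilson-z}. Replacing $(z,\widetilde w)$ by $(w,\widetilde z)$ and using the second relation in \eqref{eq:qw-dispersion} proves \eqref{eq:qw-sato-wilson-w} in the same way.

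For \eqref{eq:wronskian-JK}, take the $\lambda^0$ coefficient of \eqref{eq:qw-sato-wilson-z}, namely $R_0=0$, which reads $\partial_zT_0+(\partial_{\widetilde w}T_{M-1})T_0=0$. Right-multiplying by $T_0^{-1}$, which exists by \eqref{eq:det-T0}, gives the first identity; the $w$-version is identical.

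The only delicate step is the bookkeeping in the differentiated annihilation identity. The key point is that the right action of the constant $\Lambda_i$ commutes with $\partial_{\widetilde w}$, so the term $T_k(\partial_{\widetilde w}\Theta_i)\Lambda_i^{k+1}$ can be traded for $-(\partial_{\widetilde w}T_k)\Theta_i\Lambda_i^{k+1}$. Once that is in place, the uniqueness argument is immediate.
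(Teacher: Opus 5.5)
Your proof is correct and follows essentially the same route as the paper. You differentiate the annihilation identity, use the dispersion relations and the identity itself to eliminate $\Theta_i\Lambda_i^M$, arrive at $\sum_k R_k\Theta_i\Lambda_i^k=0$, and conclude $R_k=0$ from nondegeneracy. The only cosmetic difference is that you invoke invertibility of $\mathcal V$ through the homogeneous interpolation system, whereas the paper writes $(R_0,\ldots,R_{M-1})\Sigma=0$ and uses invertibility of $\Sigma$; these are equivalent because $\det\Sigma=\det\mathcal V$.
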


\begin{proof}
	Differentiate \eqref{eq:qw-annihilation} with respect to $z$ and
	subtract its $\widetilde w$-derivative multiplied on the right by
	$\Lambda_i$. By the dispersion relations
	\eqref{eq:qw-dispersion}, all terms involving derivatives of
	$\Theta_i$ cancel, leaving
	\begin{equation}
		\sum_{k=0}^{M-1}
		\left[
		(\partial_zT_k)\Theta_i\Lambda_i^k
		-
		(\partial_{\widetilde w}T_k)
		\Theta_i\Lambda_i^{k+1}
		\right]
		=0.
		\label{eq:step-1}
	\end{equation}
	Set $T_{-1}:=0$ and define
	\[
	R_k:=
	\partial_zT_k
	-
	\partial_{\widetilde w}T_{k-1}
	+
	(\partial_{\widetilde w}T_{M-1})T_k,
	\qquad k=0,\ldots,M-1.
	\]
	Using \eqref{eq:qw-annihilation} to eliminate
	$\Theta_i\Lambda_i^M$, equation \eqref{eq:step-1} becomes
	\[
	\sum_{k=0}^{M-1}R_k\Theta_i\Lambda_i^k=0,
	\qquad i=1,\ldots,M.
	\]
	Collecting these identities for $i=1,\ldots,M$ yields
	$
	(R_0,\ldots,R_{M-1})\Sigma=0.
	$
	Since $\Sigma$ is invertible, it follows that $R_k=0$ for all
	$k=0,\ldots,M-1$. Therefore
	\[
	\left(
	\partial_z-\lambda\partial_{\widetilde w}
	+\partial_{\widetilde w}T_{M-1}
	\right)T(\lambda)
	=
	\sum_{k=0}^{M-1}R_k\lambda^k
	=0,
	\]
	which proves \eqref{eq:qw-sato-wilson-z}. The same argument, with
	$(z,\widetilde w)$ replaced by $(w,\widetilde z)$, proves
	\eqref{eq:qw-sato-wilson-w}.
	Finally, evaluating \eqref{eq:qw-sato-wilson} at $\lambda=0$ and
	multiplying on the right by $T_0^{-1}$ gives
	\eqref{eq:wronskian-JK}.
\end{proof}

Comparison of \eqref{eq:wronskian-JK} with the Miura relations
\eqref{eq:gauge-potential} identifies the Yang and potential matrices
of the quasi-Wronskian construction as
\begin{equation}
	\mathcal J_{\mathrm W}:=T_0,
	\qquad
	\mathcal K_{\mathrm W}:=T_{M-1}.
	\label{eq:qw-JK-fields}
\end{equation}
They recover the ASDYM fields through \eqref{eq:gauge-potential}.
The associated scalar $\tau$-function is
\begin{equation}
	\tau_{\mathrm W}:=\det\Sigma.
	\label{eq:qw-tau}
\end{equation}

\begin{remark}
	Extracting the coefficients of $\lambda^0$ and $\lambda^{M-1}$
	from \eqref{eq:qw-Darboux-polynomial} gives
	\begin{align*}
		T_0
		=
		\left|
		\begin{array}{cccc}
			\Theta_1 & \cdots & \Theta_M & \Id_2\\
			\Theta_1\Lambda_1 & \cdots & \Theta_M\Lambda_M & 0\\
			\vdots & & \vdots & \vdots\\
			\Theta_1\Lambda_1^{M-1}
			& \cdots & \Theta_M\Lambda_M^{M-1} & 0\\
			\Theta_1\Lambda_1^M
			& \cdots & \Theta_M\Lambda_M^M & \boxed{0}
		\end{array}
		\right|, \qquad
		T_{M-1}
		=
		\left|
		\begin{array}{cccc}
			\Theta_1 & \cdots & \Theta_M & 0\\
			\Theta_1\Lambda_1 & \cdots & \Theta_M\Lambda_M & 0\\
			\vdots & & \vdots & \vdots\\
			\Theta_1\Lambda_1^{M-1}
			& \cdots & \Theta_M\Lambda_M^{M-1} & \Id_2\\
			\Theta_1\Lambda_1^M
			& \cdots & \Theta_M\Lambda_M^M & \boxed{0}
		\end{array}
		\right|.
	\end{align*}
	These two coefficients recover exactly the quasi-Wronskian 
	used in the quasideterminant representations of Yang's $J$-matrix
	and the potential $K$-matrix, respectively
	\cite{GilsonHamanakaHuangNimmo2020,HamanakaHuang2022}.
	In particular, the position of the identity block in the last column
	distinguishes the two representations.
\end{remark}

\section{Quasi-Grammian construction}
\label{sec-4}

We now turn to the Cauchy matrix formulation of the ASDYM equations developed in \cite{LiQuYiZhang2022,LiHamanakaHuangZhang2025} and use it to construct quasi-Grammian solutions. To facilitate a direct comparison with the quasi-Wronskian construction, we adopt the same spectral data and vacuum wave functions introduced in Sec.~\ref{sec:spectral}.

\subsection{Cauchy matrix approach}

Using the spectral data introduced above, we now formulate the corresponding Cauchy matrix construction. Consider the Sylvester equation
\begin{equation}
	\Omega\Gamma-\Xi\Omega=\rho\theta,
	\label{eq:cauchy-sylvester-equation}
\end{equation}
with diagonal spectral matrices
\begin{equation}
	\Xi=\diag(\xi_1,\ldots,\xi_M),
	\qquad
	\Gamma=\diag(\gamma_1,\ldots,\gamma_M),
	\label{eq:cauchy-diagonal-spectra}
\end{equation}
and the associated vacuum wave-function matrices
\begin{equation}
	\rho=
	\begin{pmatrix}
		-\phi_1^{(2)} & \phi_1^{(1)}\\
		\vdots & \vdots\\
		-\phi_M^{(2)} & \phi_M^{(1)}
	\end{pmatrix},
	\qquad
	\theta=
	\begin{pmatrix}
		\psi_1^{(1)} & \cdots & \psi_M^{(1)}\\[5pt]
		\psi_1^{(2)} & \cdots & \psi_M^{(2)}
	\end{pmatrix}.
	\label{eq:gram-wave-data}
\end{equation}
By \eqref{eq:dispersion-relation}, these matrices satisfy the dispersion relations
\begin{equation}
	\begin{aligned}
		\partial_z\rho=\Xi(\partial_{\widetilde{w}}\rho),
		\qquad
		\partial_z\theta=(\partial_{\widetilde{w}}\theta) \Gamma, \\
		\partial_w\rho=\Xi(\partial_{\widetilde{z}}\rho),
		\qquad
		\partial_w\theta=(\partial_{\widetilde{z}}\theta) \Gamma.
	\end{aligned}
	\label{eq:dispersion}
\end{equation}
Since the spectra of $\Xi$ and $\Gamma$ are disjoint, the Sylvester
equation \eqref{eq:cauchy-sylvester-equation} admits a unique solution
$\Omega=(\Omega_{ij})_{i,j=1}^{M}$. Its entries are given explicitly by
\begin{equation}
	\Omega_{ij}
	=
	\frac{
		\phi_i^{(1)}\psi_j^{(2)}-\phi_i^{(2)}\psi_j^{(1)}
	}{
		\gamma_j-\xi_i
	},
	\qquad i,j=1,\ldots,M.
	\label{eq:gram-cauchy-kernel}
\end{equation}
Thus, $\Omega$ is a Cauchy-type matrix. The associated scalar $\tau$-function is
\begin{equation}
	\tau_{\mathrm{G}}=\det\Omega
	=
	\det\left[
	\frac{
		\phi_i^{(1)}\psi_j^{(2)}-\phi_i^{(2)}\psi_j^{(1)}
	}{
		\gamma_j-\xi_i
	}
	\right]_{i,j=1}^{M}.
	\label{eq:gram-tau}
\end{equation}
It serves as the common scalar denominator of the quasi-Grammian expressions appearing below. In what follows, we work locally on a domain where
$\tau_{\mathrm G}\neq0$, so that $\Omega$ is invertible.

Recent work has clarified the relation between the Cauchy matrix approach and the Sato-theoretic dressing formulation of integrable hierarchies, including the ASDYM hierarchy \cite{LiZhang2026,Takasaki2026}. In this framework, we introduce the rational matrix function
\begin{equation}
	W(\lambda)
	:=
	\left|
	\begin{array}{cc}
		\Omega & (\lambda\Id_M-\Xi)^{-1}\rho\\
		\theta & \boxed{\Id_2}
	\end{array}
	\right|
	=
	\Id_2-\theta\Omega^{-1}
	(\lambda\Id_M-\Xi)^{-1}\rho,
	\label{eq:W-rational}
\end{equation}
which plays the role of the corresponding Sato--Wilson dressing matrix.

The Yang and potential matrices are recovered from the value of $W(\lambda)$ at $\lambda=0$ and from its expansion at infinity:
\begin{equation}
	\mathcal J_{\mathrm{G}}=W(0),
	\qquad
	W(\lambda)
	=
	\Id_2+\lambda^{-1}\mathcal K_{\mathrm{G}}
	+O(\lambda^{-2}),
	\qquad
	\lambda\to\infty.
	\label{eq:gram-dressing-recovery}
\end{equation}
Equivalently, they admit the quasideterminant representations
\begin{equation*}
	\begin{aligned}
		\mathcal J_{\mathrm{G}}
		&=
		\left|
		\begin{array}{cc}
			\Omega & -\Xi^{-1}\rho\\
			\theta & \boxed{\Id_2}
		\end{array}
		\right|,
		\qquad
		\mathcal K_{\mathrm{G}}
		=
		\left|
		\begin{array}{cc}
			\Omega & \rho\\
			\theta & \boxed{0}
		\end{array}
		\right|.
	\end{aligned}
	\label{eq:gram-JK}
\end{equation*}
We show below that these matrices satisfy the Miura relations \eqref{eq:gauge-potential}, thereby confirming $\mathcal J_{\mathrm G}$ and $\mathcal K_{\mathrm G}$ as the Yang and potential matrices of the quasi-Grammian ASDYM solution.

\subsection{Equivalence with the quasi-Wronskian construction}

We now compare the quasi-Grammian construction with the quasi-Wronskian formulation developed in the previous section. 

\begin{theorem}
	The quasi-Wronskian and quasi-Grammian $\tau$-functions are related by (cf. \cite[Eq.~(1.7)]{IshikawaOkadaTagawaZeng2006})
	\begin{equation}
		\tau_{\mathrm{W}}
		=
		(-1)^{M(M-1)/2}
		\prod_{i,j=1}^M(\gamma_j-\xi_i)
		\tau_{\mathrm{G}}.
	\end{equation}
	In particular, since $\gamma_j\neq\xi_i$ for all $i,j$,
	$\tau_{\mathrm W}$ and $\tau_{\mathrm G}$ vanish simultaneously and
	therefore have the same nonvanishing domain.
\end{theorem}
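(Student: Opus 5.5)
The plan is to prove the identity by multilinearity, reducing it to the Cauchy and Vandermonde determinant formulas. We already know $\tau_{\mathrm W}=\det\Sigma=\det\mathcal V$. In $\det\mathcal V$ each vector $\phi_i$ enters only through row $i$, and that row is linear in $\phi_i$. Likewise each $\psi_j$ enters only through row $M+j$, linearly. In $\tau_{\mathrm G}=\det\Omega$, $\phi_i$ enters only through row $i$ of $\Omega$ and $\psi_j$ only through column $j$, again linearly by \eqref{eq:gram-cauchy-kernel}. So both sides of the claimed identity are multilinear functions of $(\phi_1,\ldots,\phi_M,\psi_1,\ldots,\psi_M)\in(\mathbb C^2)^{2M}$. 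The spacetime dependence plays no role, and it suffices to check the identity when every $\phi_i$ and $\psi_j$ is a standard basis vector $e_1$ or $e_2$.

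Fix such a choice. Let $S=\{i:\phi_i=e_1\}$ and $U=\{j:\psi_j=e_1\}$.

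\emph{Wronskian side.} Each row of $\mathcal V$ is supported in the first block of $M$ columns or in the second. The first block contains the rows indexed by $S$ and by $U$, and these rows are Vandermonde rows in the nodes $\{\xi_i\}_{i\in S}\cup\{\gamma_j\}_{j\in U}$. The second block likewise holds the Vandermonde rows in the nodes $\{\xi_i\}_{i\notin S}\cup\{\gamma_j\}_{j\notin U}$.
\begin{itemize}
\item If $|S|+|U|\neq M$, then $\det\mathcal V=0$.
\item If $|S|+|U|=M$, then $\det\mathcal V=\pm V(\xi_S,\gamma_U)\,V(\xi_{S^c},\gamma_{U^c})$, where $V(\cdot)$ denotes a Vandermonde determinant. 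The sign is that of the row permutation bringing the rows into block order.
\end{itemize}

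\emph{Grammian side.} By \eqref{eq:gram-cauchy-kernel}, $\Omega_{ij}$ vanishes unless $\phi_i$ and $\psi_j$ are different basis vectors.
\begin{itemize}
\item For $i\in S$ and $j\notin U$ we get $\Omega_{ij}=1/(\gamma_j-\xi_i)$.
\item For $i\notin S$ and $j\in U$ we get $\Omega_{ij}=-1/(\gamma_j-\xi_i)$.
\end{itemize}
After permuting rows and columns, $\Omega$ is therefore block-diagonal with two Cauchy blocks, indexed by $S\times U^c$ and $S^c\times U$. It is nonzero only if $|S|=|U^c|$, which is the same condition $|S|+|U|=M$ as on the Wronskian side. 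Cauchy's determinant formula then gives $\det\Omega$ as a signed ratio. The numerator is $\prod(\xi_{i'}-\xi_i)\prod(\gamma_j-\gamma_{j'})$ over pairs inside each block. The denominator is $\prod(\gamma_j-\xi_i)$ over $(i,j)\in S\times U^c$ and over $S^c\times U$. There is also a factor $(-1)^{|S^c|}$ from the minus signs.

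\emph{Comparing the products.} Multiply $\det\Omega$ by the full product $\prod_{i,j}(\gamma_j-\xi_i)$. The cross factors over $S\times U^c$ and $S^c\times U$ cancel the Cauchy denominators. The surviving factors over $S\times U$ and $S^c\times U^c$ are exactly the mixed $\xi$–$\gamma$ factors of the two Vandermonde determinants $V(\xi_S,\gamma_U)$ and $V(\xi_{S^c},\gamma_{U^c})$. The pure $\xi$–$\xi$ and $\gamma$–$\gamma$ factors match the Cauchy numerators. Hence the two sides agree up to a sign depending a priori on $(S,U)$.

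The main obstacle is the sign bookkeeping: showing that this sign is always $(-1)^{M(M-1)/2}$, independent of $S$ and $U$. The sign collects several contributions:
\begin{itemize}
\item the row permutation of $\mathcal V$;
\item the orientation of each $\gamma-\xi$ factor inside the Vandermonde determinants versus in $\prod(\gamma_j-\xi_i)$;
\item the row and column permutation that block-diagonalizes $\Omega$;
\item the factor $(-1)^{|S^c|}$.
\end{itemize}
I would first verify the base case $S=\{1,\ldots,M\}$, $U=\emptyset$ directly. There $\det\mathcal V$ is a $\xi$-Vandermonde times a $\gamma$-Vandermonde, $\Omega$ is one full Cauchy matrix, and the sign is easily seen to be $(-1)^{M(M-1)/2}$, coming from reversing the order of the $\gamma$–$\xi$ differences. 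I would then show the total sign is invariant under the elementary move of switching one $\phi_i$ from $e_1$ to $e_2$ together with one $\psi_j$ from $e_2$ to $e_1$, which preserves $|S|+|U|=M$. This amounts to tracking one transposition-type change in each determinant. Since these moves connect all admissible pairs $(S,U)$, the identity follows.

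Nonvanishing on a common domain is then immediate, since $\prod_{i,j}(\gamma_j-\xi_i)\neq0$.
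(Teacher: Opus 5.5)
Your route is sound and genuinely different from the paper's. As written, though, the decisive step is only asserted: the whole prefactor $(-1)^{M(M-1)/2}$ is the claim that the residual sign is the same for \emph{every} admissible $(S,U)$, and your move argument is not carried out. It is also not a single transposition, since a move changes the inversion counts by amounts that depend on the positions of $i$ and $j$.

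The sign is easiest to compute in closed form. Put $s=|S|=|U^c|$, $u=|U|=|S^c|$, $\iota(A,B)=\#\{(a,b)\in A\times B:\ a<b\}$ and $\Delta_\xi(A)=\prod_{a<a'\in A}(\xi_{a'}-\xi_a)$, with $\Delta_\gamma$ defined similarly.

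\emph{Wronskian side.} Reordering the rows of $\mathcal V$ as $(S,\,M+U\mid S^c,\,M+U^c)$ costs $(-1)^{\iota(S^c,S)+u^2+\iota(U^c,U)}$. Because the $\xi$-nodes precede the $\gamma$-nodes in each block, every mixed Vandermonde factor appears as $\gamma_j-\xi_i$ with no extra sign.

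\emph{Grammian side.} Reordering $\Omega$ to rows $(S\mid S^c)$ and columns $(U^c\mid U)$ costs $(-1)^{\iota(S^c,S)+\iota(U,U^c)}$, and the minus signs give $(-1)^u$. Cauchy's formula, with rows and columns in increasing order and $|A|=|B|$, reads
\[
\det\Bigl[\frac{1}{\gamma_j-\xi_i}\Bigr]_{i\in A,\,j\in B}
=(-1)^{\binom{|A|}{2}}\,
\frac{\Delta_\xi(A)\,\Delta_\gamma(B)}{\prod_{i\in A,\,j\in B}(\gamma_j-\xi_i)} .
\]
Applied to the two blocks, it contributes $(-1)^{\binom{s}{2}+\binom{u}{2}}$.

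\emph{Combining.} The $\iota(S^c,S)$ terms cancel, and $\iota(U^c,U)+\iota(U,U^c)=su$. Since $u^2+u$ is even, the ratio of the two sides has sign exponent
\[
u^2+u+su+\binom{s}{2}+\binom{u}{2}\equiv\binom{s+u}{2}=\binom{M}{2}\pmod 2,
\]
which is independent of $(S,U)$ and completes your argument. One small correction: in your base case the sign comes from the $\gamma$--$\gamma$ factors, namely $\prod_{j<l}(\gamma_j-\gamma_l)$ in the Cauchy numerator versus $\prod_{j<l}(\gamma_l-\gamma_j)$ in $V(\gamma)$. No $\gamma$--$\xi$ factors occur in $\det\mathcal V$ there.

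\emph{Comparison with the paper.} The paper never splits into configurations. It factors $\diag(V_\xi,V_\xi)$ out of $\mathcal V^{T}$ and Laplace-expands along the first $M$ columns, which are now two stacked diagonal blocks. This gives the single determinant $\det[(V_\xi^{-1}V_\gamma)_{ij}(\phi_i^{(1)}\psi_j^{(2)}-\phi_i^{(2)}\psi_j^{(1)})]$. It then uses the Lagrange basis $(V_\xi^{-1}V_\gamma)_{ij}=\prod_{k\neq i}(\gamma_j-\xi_k)/\prod_{k\neq i}(\xi_i-\xi_k)$ to extract row and column scalings. The Cauchy kernel $1/(\gamma_j-\xi_i)$ then emerges directly, and the only sign is the global one from $\prod_i\prod_{k\neq i}(\xi_i-\xi_k)=(-1)^{M(M-1)/2}(\det V_\xi)^2$.

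That argument keeps the wave functions general throughout. It needs neither Cauchy's determinant formula nor per-configuration sign tracking, and it exhibits the reduced matrix as a diagonal rescaling of $\Omega$ itself. Your multilinear reduction is more elementary and shows why the two $\tau$-functions agree term by term over the $2^{2M}$ component configurations, including the shared vanishing condition $|S|+|U|=M$. The price is the sign bookkeeping above.
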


\begin{proof}
	Recall \eqref{eq:step-det-sigma}.
	Since the $\xi_i$ are pairwise distinct, $V_\xi$ is invertible.
	Factoring $\diag(V_\xi,V_\xi)$ from the left gives
	\begin{align}
		\det\Sigma
		&=
		(\det V_\xi)^2
		\det
		\begin{pmatrix}
			\diag(\phi_1^{(1)},\ldots,\phi_M^{(1)})
			&
			V_\xi^{-1}V_\gamma
			\diag(\psi_1^{(1)},\ldots,\psi_M^{(1)})
			\\
			\diag(\phi_1^{(2)},\ldots,\phi_M^{(2)})
			&
			V_\xi^{-1}V_\gamma
			\diag(\psi_1^{(2)},\ldots,\psi_M^{(2)})
		\end{pmatrix}
		\nonumber\\
		&=
		(\det V_\xi)^2
		\det\left[
		(V_\xi^{-1}V_\gamma)_{ij}
		\left(
		\phi_i^{(1)}\psi_j^{(2)}
		-\phi_i^{(2)}\psi_j^{(1)}
		\right)
		\right]_{i,j=1}^M,
		\label{eq:step-det-reduction}
	\end{align}
	where the second equality follows from the Laplace expansion with
	respect to the first $M$ columns.
	By the Lagrange interpolation formula,
	\begin{align}
		(V_\xi^{-1}V_\gamma)_{ij}
		=
		\frac{\displaystyle
			\prod_{\substack{k=1\\k\neq i}}^M
			(\gamma_j-\xi_k)}
		{\displaystyle
			\prod_{\substack{k=1\\k\neq i}}^M
			(\xi_i-\xi_k)}
		=
		\frac{\displaystyle
			\prod_{k=1}^M(\gamma_j-\xi_k)}
		{\displaystyle
			(\gamma_j-\xi_i)
			\prod_{\substack{k=1\\k\neq i}}^M
			(\xi_i-\xi_k)}.
		\label{eq:step-lagrange-matrix}
	\end{align}
	Substituting \eqref{eq:step-lagrange-matrix} into
	\eqref{eq:step-det-reduction} and extracting the corresponding row
	and column factors yields
	\begin{align}
		\det\Sigma
		&=
		(\det V_\xi)^2
		\frac{\displaystyle
			\prod_{j=1}^M\prod_{k=1}^M
			(\gamma_j-\xi_k)}
		{\displaystyle
			\prod_{i=1}^M
			\prod_{\substack{k=1\\k\neq i}}^M
			(\xi_i-\xi_k)}
		\det\left[
		\frac{
			\phi_i^{(1)}\psi_j^{(2)}
			-\phi_i^{(2)}\psi_j^{(1)}
		}{
			\gamma_j-\xi_i
		}
		\right]_{i,j=1}^M.
		\label{eq:step-det-cauchy}
	\end{align}
	Using
	\begin{align*}
		\prod_{i=1}^M
		\prod_{\substack{k=1\\k\neq i}}^M
		(\xi_i-\xi_k)
		=
		(-1)^{M(M-1)/2}
		\prod_{1\leq i<k\leq M}
		(\xi_k-\xi_i)^2
		=
		(-1)^{M(M-1)/2}(\det V_\xi)^2,
	\end{align*}
	equation \eqref{eq:step-det-cauchy} reduces to
	\begin{align*}
		\det\Sigma
		&=
		(-1)^{M(M-1)/2}
		\prod_{i,j=1}^M(\gamma_j-\xi_i)
		\det\left[
		\frac{
			\phi_i^{(1)}\psi_j^{(2)}
			-\phi_i^{(2)}\psi_j^{(1)}
		}{
			\gamma_j-\xi_i
		}
		\right]_{i,j=1}^M
		\nonumber\\
		&=
		(-1)^{M(M-1)/2}
		\prod_{i,j=1}^M(\gamma_j-\xi_i)
		\det\Omega.
	\end{align*}
	Since $\tau_{\mathrm W}=\det\Sigma$ and
	$\tau_{\mathrm G}=\det\Omega$, the claimed identity follows.
\end{proof}

The preceding result shows that $\tau_{\mathrm W}$ and $\tau_{\mathrm G}$ have the same nonvanishing domain. Since $\tau_{\mathrm W}=\det\mathcal V$, the condition $\tau_{\mathrm G}\neq0$ implies that $\mathcal V$ is invertible. Hence, on this common domain, the monic interpolation polynomial $P(\lambda)$ is uniquely determined by the conditions \eqref{eq:spectral-interpolation-conditions}.

This uniqueness gives a direct means of comparing the quasi-Grammian dressing matrix with the Darboux matrix polynomial. The resulting relation is the following.

\begin{theorem}
	\label{thm:qw-gram-dressing}
	The Darboux matrix polynomial \eqref{eq:qw-Darboux-polynomial}
	and the dressing matrix \eqref{eq:W-rational}
	satisfy
	\begin{equation}
		T(\lambda)
		=
		q_\xi(\lambda)W(\lambda),
		\qquad
		q_\xi(\lambda)
		=
		\prod_{i=1}^M(\lambda-\xi_i).
		\label{eq:qw-gram-dressing}
	\end{equation}
\end{theorem}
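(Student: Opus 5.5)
The plan is to use the uniqueness of the monic interpolation polynomial from Sec.~\ref{sec:spectral}. We already know $T(\lambda)=P(\lambda)$. It therefore suffices to show that $\widetilde P(\lambda):=q_\xi(\lambda)W(\lambda)$ is a monic $2\times2$ matrix polynomial of degree $M$ satisfying the conditions \eqref{eq:spectral-interpolation-conditions}. On the common domain where $\tau_{\mathrm G}\neq0$, the preceding theorem gives $\det\mathcal V=\tau_{\mathrm W}\neq0$, so uniqueness applies and $\widetilde P=P=T$.

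\textbf{Polynomiality and monicity.} Write $W(\lambda)=\Id_2-\theta\Omega^{-1}(\lambda\Id_M-\Xi)^{-1}\rho$. Since $(\lambda\Id_M-\Xi)^{-1}=\diag\bigl((\lambda-\xi_i)^{-1}\bigr)$, we have
\[
q_\xi(\lambda)(\lambda\Id_M-\Xi)^{-1}=\diag\Bigl(\prod_{k\neq i}(\lambda-\xi_k)\Bigr),
\]
a polynomial matrix of degree $M-1$. Hence $q_\xi(\lambda)W(\lambda)=q_\xi(\lambda)\Id_2-(\text{polynomial of degree}\le M-1)$ is monic of degree $M$.

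\textbf{Condition at $\xi_i$.} Only the $i$-th diagonal entry of $q_\xi(\lambda)(\lambda\Id_M-\Xi)^{-1}$ survives at $\lambda=\xi_i$, so
\[
\widetilde P(\xi_i)=-c_i\,\theta\Omega^{-1}e_ie_i^{T}\rho,\qquad c_i=\prod_{k\neq i}(\xi_i-\xi_k).
\]
The $i$-th row of $\rho$ is $(-\phi_i^{(2)},\phi_i^{(1)})$, and $(-\phi_i^{(2)},\phi_i^{(1)})\phi_i=0$. Therefore $\widetilde P(\xi_i)\phi_i=0$.

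\textbf{Condition at $\gamma_j$.} This is the main step and uses the Sylvester equation \eqref{eq:cauchy-sylvester-equation}. Evaluate $W(\gamma_j)\psi_j$, noting that $\psi_j=\theta e_j$. It suffices to show that $(\gamma_j\Id_M-\Xi)^{-1}\rho\theta e_j=\Omega e_j$. Indeed, the $j$-th column of $\Omega\Gamma-\Xi\Omega=\rho\theta$ reads $(\gamma_j\Id_M-\Xi)\Omega e_j=\rho\theta e_j$, and $\gamma_j\Id_M-\Xi$ is invertible because $\gamma_j\neq\xi_i$ for all $i$. Then
\[
W(\gamma_j)\psi_j=\theta e_j-\theta\Omega^{-1}\Omega e_j=0,
\]
so $\widetilde P(\gamma_j)\psi_j=q_\xi(\gamma_j)W(\gamma_j)\psi_j=0$, and in particular $W(\gamma_j)$ is finite there. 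Alternatively, the explicit Cauchy kernel \eqref{eq:gram-cauchy-kernel} verifies the same column identity directly.

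\textbf{Conclusion.} Uniqueness of the monic interpolant gives $q_\xi W=P=T$. I expect the only delicate point to be keeping track of the domains: the identity is needed where both $\Omega$ and $\mathcal V$ are invertible, and the $\tau$-relation shows these coincide.
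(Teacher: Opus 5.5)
Your proof is correct and follows the paper's argument: you show $q_\xi W$ is a monic degree-$M$ polynomial satisfying the interpolation conditions, with the $\xi_i$ condition coming from the $i$th row of $\rho$ annihilating $\phi_i$ and the $\gamma_j$ condition from the Sylvester equation, and then invoke uniqueness on the common domain $\tau_{\mathrm G}\neq0$. The only cosmetic difference is that you evaluate the $j$th column of the Sylvester equation directly at $\lambda=\gamma_j$, whereas the paper first derives $W(\lambda)\theta=\theta\Omega^{-1}(\lambda\Id_M-\Xi)^{-1}\Omega(\lambda\Id_M-\Gamma)$ for general $\lambda$ and then specializes.
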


\begin{proof}
	For convenience, set
	\[
	\widehat P(\lambda)
	:=
	q_\xi(\lambda)W(\lambda).
	\]
	Since multiplication by $q_\xi(\lambda)$ cancels all possible finite poles of $W(\lambda)$, the matrix $\widehat P(\lambda)$ is polynomial in $\lambda$. Together with the normalization
	$W(\lambda)=\Id_2+O(\lambda^{-1})$ as $\lambda\to\infty$,
	this shows that $\widehat P(\lambda)$ is a monic $2\times2$ matrix polynomial of degree $M$.
	
	It remains to show that $\widehat P(\lambda)$ satisfies the interpolation conditions \eqref{eq:spectral-interpolation-conditions}. From \eqref{eq:W-rational}, we have
	\[
	\widehat P(\lambda)
	=
	\left(\prod_{j=1}^M(\lambda-\xi_j)\right)\Id_2
	-
	\theta\Omega^{-1}
	\left(\prod_{j=1}^M(\lambda-\xi_j)\right)
	(\lambda\Id_M-\Xi)^{-1}\rho.
	\]
	For $\lambda=\xi_i$, using the explicit form of $\rho$ in \eqref{eq:gram-wave-data}, we obtain
	\begin{equation*}
		\begin{aligned}
			\widehat P(\xi_i)
			\phi_i
			=
			-\theta\Omega^{-1}
			\left.
			\begin{pmatrix}
				\displaystyle\prod_{j\neq1}(\lambda-\xi_j)
				&&\\
				&\ddots&\\
				&&\displaystyle\prod_{j\neq M}(\lambda-\xi_j)
			\end{pmatrix}
			\begin{pmatrix}
				\phi_1^{(1)}\phi_i^{(2)}-\phi_1^{(2)}\phi_i^{(1)}\\
				\vdots\\
				\phi_i^{(1)}\phi_i^{(2)}-\phi_i^{(2)}\phi_i^{(1)}\\
				\vdots\\
				\phi_M^{(1)}\phi_i^{(2)}-\phi_M^{(2)}\phi_i^{(1)}
			\end{pmatrix}
			\right|_{\lambda=\xi_i}
			=0.
		\end{aligned}
	\end{equation*}
	Indeed, every component with index different from $i$
	contains the factor $\lambda-\xi_i$, whereas the $i$th
	component vanishes identically since
	$\phi_i^{(1)}\phi_i^{(2)}-\phi_i^{(2)}\phi_i^{(1)}=0$.
	Hence
	$\widehat P(\xi_i)\phi_i=0$, which establishes the first set of
	conditions in \eqref{eq:spectral-interpolation-conditions}.
	
	For the second set of conditions, the Sylvester equation
	\eqref{eq:cauchy-sylvester-equation} may be rewritten as
	\[
	(\lambda\Id_M-\Xi)\Omega-\rho\theta
	=
	\Omega(\lambda\Id_M-\Gamma).
	\]
	Therefore, for $\lambda\notin\{\xi_1,\ldots,\xi_M\}$,
	\begin{equation}
		\begin{aligned}
			W(\lambda)\theta
			&=
			\theta-\theta\Omega^{-1}
			(\lambda\Id_M-\Xi)^{-1}\rho\theta
			\\
			&=
			\theta-\theta\Omega^{-1}
			(\lambda\Id_M-\Xi)^{-1}
			\bigl[
			(\lambda\Id_M-\Xi)\Omega
			-\Omega(\lambda\Id_M-\Gamma)
			\bigr]
			\\
			&=
			\theta\Omega^{-1}
			(\lambda\Id_M-\Xi)^{-1}
			\Omega(\lambda\Id_M-\Gamma).
		\end{aligned}
		\label{eq:step-Wlambda}
	\end{equation}
	Taking the $i$th column of \eqref{eq:step-Wlambda} gives
	\[
	W(\lambda)
	\psi_i
	=
	\theta\Omega^{-1}
	\begin{pmatrix}
		\displaystyle\frac{1}{\lambda-\xi_1}
		&&\\
		&\ddots&\\
		&&\displaystyle\frac{1}{\lambda-\xi_M}
	\end{pmatrix}
	\Omega
	\begin{pmatrix}
		0\\
		\vdots\\
		\lambda-\gamma_i\\
		\vdots\\
		0
	\end{pmatrix}.
	\]
	Since $\gamma_i\neq\xi_j$ for all $j$, the resolvent is regular at
	$\lambda=\gamma_i$, whereas the rightmost vector vanishes at this
	point. It follows that
	\[
	W(\gamma_i)\psi_i=0,
	\qquad
	\widehat P(\gamma_i)\psi_i
	=
	q_\xi(\gamma_i)W(\gamma_i)\psi_i
	=
	0.
	\]
	Thus $\widehat P(\lambda)$ satisfies the second set of conditions in
	\eqref{eq:spectral-interpolation-conditions} as well.
	
	Consequently, $\widehat P(\lambda)$ is a monic $2\times2$ matrix
	polynomial of degree $M$ satisfying exactly the same interpolation
	conditions as $P(\lambda)$. By the uniqueness of the interpolation
	problem, we conclude that
	$
	\widehat P(\lambda)=P(\lambda),
$
	and hence
$
	P(\lambda)=T(\lambda)=q_\xi(\lambda)W(\lambda)
$.
\end{proof}

The relation \eqref{eq:qw-gram-dressing} identifies the quasi-Wronskian and quasi-Grammian constructions at the level of their generating matrices. We next translate this correspondence into the associated ASDYM fields.

\begin{corollary}
	\label{cor:qw-gram-equivalence}
	Let the quasi-Wronskian and quasi-Grammian constructions be generated from the same spectral data. Then their Yang and potential matrices are related by
	\begin{equation}
		\mathcal J_{\mathrm W}
		=
		q_\xi(0)\mathcal J_{\mathrm G}
		=
		(-1)^M(\det\Xi)\mathcal J_{\mathrm G},
		\qquad
		\mathcal K_{\mathrm W}
		=
		\mathcal K_{\mathrm G}
		-
		(\tr\Xi)\Id_2.
		\label{eq:qw-gram-JK-relation}
	\end{equation}
	Consequently, the two constructions determine identical ASDYM gauge potentials through the Miura relations \eqref{eq:gauge-potential}, and hence represent the same ASDYM field.
\end{corollary}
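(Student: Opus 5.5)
The plan is to read both identities in \eqref{eq:qw-gram-JK-relation} directly off Theorem~\ref{thm:qw-gram-dressing}, by comparing coefficients at two points: $\lambda=0$ and $\lambda=\infty$. We then use the constant normalization freedom in Remark to pass to the gauge potentials.

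\emph{The Yang matrix.} Evaluate $T(\lambda)=q_\xi(\lambda)W(\lambda)$ at $\lambda=0$. This is legitimate because $0\notin\{\xi_i\}$: the spectral parameters are nonzero, so the resolvent $(\lambda\Id_M-\Xi)^{-1}$ in \eqref{eq:W-rational} is regular there. By \eqref{eq:qw-JK-fields}, $T(0)=T_0=\mathcal J_{\mathrm W}$, and by \eqref{eq:gram-dressing-recovery}, $W(0)=\mathcal J_{\mathrm G}$. Moreover $q_\xi(0)=\prod_i(-\xi_i)=(-1)^M\det\Xi$, which gives the first relation.

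\emph{The potential matrix.} Compare the $\lambda^{M-1}$ coefficients of both sides, using the expansion at infinity. On the left, the coefficient is $T_{M-1}=\mathcal K_{\mathrm W}$. On the right, write
\[
q_\xi(\lambda)=\lambda^M-(\tr\Xi)\lambda^{M-1}+O(\lambda^{M-2}),
\qquad
W(\lambda)=\Id_2+\lambda^{-1}\mathcal K_{\mathrm G}+O(\lambda^{-2}).
\]
Multiplying these, the $\lambda^{M-1}$ coefficient is $\mathcal K_{\mathrm G}-(\tr\Xi)\Id_2$. The one point to justify is that the Laurent expansion of $W$ at infinity has the stated form. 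This follows from $(\lambda\Id_M-\Xi)^{-1}=\lambda^{-1}\Id_M+O(\lambda^{-2})$, which gives $\mathcal K_{\mathrm G}=-\theta\Omega^{-1}\rho$, matching the quasideterminant form. Because $q_\xi(\lambda)W(\lambda)$ is a polynomial, identifying it with its expansion at infinity and comparing coefficients is valid.

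\emph{Gauge potentials.} The factor $(-1)^M\det\Xi$ is a nonzero constant, since all $\xi_i\neq0$. So $\mathcal J_{\mathrm W}=\mathcal J_{\mathrm G}C$ with $C=(-1)^M(\det\Xi)\Id_2\in\GL_2(\mathbb C)$ constant. Likewise, $\mathcal K_{\mathrm W}=\mathcal K_{\mathrm G}+C_0$ with $C_0=-(\tr\Xi)\Id_2$ constant. By the Remark on normalization freedom, both give the same $A_z$ and $A_w$ in \eqref{eq:gauge-potential}.

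Two points need care. For the $\mathcal J$ side, Theorem~\ref{thm:qw-sato-wilson} already shows that $\mathcal J_{\mathrm W}$ and $\mathcal K_{\mathrm W}$ satisfy the Miura relations. Since $\mathcal J_{\mathrm G}$ and $\mathcal K_{\mathrm G}$ differ from them only by these constants, they satisfy the Miura relations as well, which also confirms the claim announced before the theorem. The $\mathcal K$ side needs nothing further. The only real care is the sign bookkeeping in $q_\xi(0)$ and in the subleading coefficient $-\tr\Xi$; there is no substantial obstacle.
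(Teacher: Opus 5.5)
Your proposal is correct and follows the paper's proof essentially step for step. You evaluate $T(\lambda)=q_\xi(\lambda)W(\lambda)$ at $\lambda=0$ for the Yang matrices, compare the $\lambda^{M-1}$ coefficients of the expansions at infinity for the potential matrices, and use the constancy of $(-1)^M\det\Xi$ and $\tr\Xi$ to identify the gauge potentials. Your extra observation that $\mathcal J_{\mathrm G},\mathcal K_{\mathrm G}$ inherit the Miura relations from Theorem~\ref{thm:qw-sato-wilson} makes explicit a point the paper leaves implicit.
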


\begin{proof}
By Theorem~\ref{thm:qw-gram-dressing}, we have 
$
T(\lambda)
=
q_\xi(\lambda)W(\lambda).
$
As $\lambda\to\infty$, one obtains
\[
q_\xi(\lambda)
=
\lambda^M
-
(\tr\Xi)\lambda^{M-1}
+
O(\lambda^{M-2}),
\]
and
\[
W(\lambda)
=
\Id_2
+
\lambda^{-1}\mathcal K_{\mathrm G}
+
O(\lambda^{-2}).
\]
Substituting these expansions into the relation above yields
\begin{align*}
	T(\lambda)
	&=
	\lambda^M
	\left(
	1-\lambda^{-1}\tr\Xi+O(\lambda^{-2})
	\right)
	\left(
	\Id_2+\lambda^{-1}\mathcal K_{\mathrm G}
	+O(\lambda^{-2})
	\right)
	\\
	&=
	\lambda^M\Id_2
	+
	\lambda^{M-1}
	\left(
	\mathcal K_{\mathrm G}
	-
	(\tr\Xi)\Id_2
	\right)
	+
	O(\lambda^{M-2}).
\end{align*}
Comparing the coefficient of $\lambda^{M-1}$ with that in the Darboux matrix polynomial gives
\[
\mathcal K_{\mathrm W}
=
\mathcal K_{\mathrm G}
-
(\tr\Xi)\Id_2.
\]
	
On the other hand, evaluating \eqref{eq:qw-gram-dressing}
at $\lambda=0$ gives
\[
\mathcal J_{\mathrm W}
=
q_\xi(0)\mathcal J_{\mathrm G}
=
(-1)^M(\det\Xi)\mathcal J_{\mathrm G}.
\]
Together with the relation for the potential matrices, this proves
\eqref{eq:qw-gram-JK-relation}.

Since $\tr\Xi$ and $\det\Xi$ depend only on the spectral parameters, they are independent of the spacetime variables. It follows that
\[
-(\partial_z\mathcal J_{\mathrm W})
\mathcal J_{\mathrm W}^{-1}
=
-(\partial_z\mathcal J_{\mathrm G})
\mathcal J_{\mathrm G}^{-1},
\qquad
-(\partial_w\mathcal J_{\mathrm W})
\mathcal J_{\mathrm W}^{-1}
=
-(\partial_w\mathcal J_{\mathrm G})
\mathcal J_{\mathrm G}^{-1},
\]
while
\[
\partial_{\widetilde w}\mathcal K_{\mathrm W}
=
\partial_{\widetilde w}\mathcal K_{\mathrm G},
\qquad
\partial_{\widetilde z}\mathcal K_{\mathrm W}
=
\partial_{\widetilde z}\mathcal K_{\mathrm G}.
\]
Therefore, the quasi-Wronskian and quasi-Grammian constructions determine the same gauge potentials through \eqref{eq:gauge-potential}, and hence the same ASDYM gauge field.
\end{proof}

\begin{remark}
	Theorem~\ref{thm:qw-gram-dressing} yields an
	$\SL_2(\mathbb C)$ normalization. Indeed,
	\[
	\det W(\lambda)=\frac{q_\gamma(\lambda)}{q_\xi(\lambda)},
	\qquad
	\det\mathcal J_{\mathrm G}=\frac{\det\Gamma}{\det\Xi},
	\qquad
	\tr\mathcal K_{\mathrm G}=\tr\Xi-\tr\Gamma.
	\]
	Thus, with a fixed choice of the constant square root, the matrices
	\[
	\widehat{\mathcal J}
	:=
	\left(\frac{\det\Xi}{\det\Gamma}\right)^{1/2}
	\mathcal J_{\mathrm G},
	\qquad
	\widehat{\mathcal K}
	:=
	\mathcal K_{\mathrm G}
	-\frac{\tr\Xi-\tr\Gamma}{2}\Id_2
	\]
	satisfy $\det\widehat{\mathcal J}=1$ and
	$\tr\widehat{\mathcal K}=0$, while leaving the gauge potentials
	unchanged. In particular, the associated connection is
	$\mathfrak{sl}_2(\mathbb C)$-valued.
\end{remark}

\section{Concluding remarks}
\label{sec-5}

The main result of this paper is that the quasi-Wronskian and quasi-Grammian constructions are governed by the same matrix interpolation problem. For a fixed set of pairwise distinct nonzero spectral points and the associated vacuum wave functions, the monic Darboux matrix polynomial $T(\lambda)$ is uniquely determined by the corresponding interpolation conditions. On the quasi-Grammian side, the dressing matrix $W(\lambda)$ obtained from the Cauchy matrix approach yields the same polynomial after multiplication by the scalar spectral factor $q_\xi(\lambda)$:
\[
T(\lambda)=q_\xi(\lambda)W(\lambda).
\]
This identity provides the direct algebraic link between the two constructions and shows that the quasi-Wronskian and quasi-Grammian formulations are two realizations of the same underlying spectral data.

This correspondence extends directly to their associated ASDYM fields. The quasi-Wronskian and quasi-Grammian $\tau$-functions differ only by a nonzero factor and therefore have the same nonvanishing domain. Moreover, the Yang and potential matrices are related by
\[
\mathcal J_{\mathrm W}
=
(-1)^M(\det\Xi)\mathcal J_{\mathrm G},
\qquad
\mathcal K_{\mathrm W}
=
\mathcal K_{\mathrm G}
-
(\tr\Xi)\Id_2.
\]
The differences are thus entirely contained in constant normalization factors. Consequently, both constructions produce the same gauge potentials and hence the same ASDYM gauge field.
These results clarify the relation between two solution constructions that arise from rather different formalisms. The matrix interpolation polynomial provides a common algebraic object through which their equivalence becomes transparent. 

Beyond the quasi-Wronskian and quasi-Grammian constructions considered here, the ASDYM equations admit many other classes of exact solutions obtained by different methods. A natural direction for future work is to investigate whether similar exact correspondences can be established among these different solution representations. Since many lower-dimensional integrable equations arise as symmetry reductions of the ASDYM equations, it is also interesting to examine how such correspondences behave under reduction and whether they induce analogous equivalences between different solution constructions of the reduced equations.

\section*{Acknowledgments}

The author thanks Masashi Hamanaka and Da-jun Zhang
for valuable comments, and Ken-ichi Maruno for financial
support that enabled the author to attend and give a talk
at the conference \emph{From Nonlinear Waves to Integrable Systems 2026} in Matsue, Japan.

This work was supported by the Japan Society for the
Promotion of Science (JSPS) under Grant No.~P25326
and the National Natural Science Foundation of China
(NSFC) under Grant No.~12601462.

\section*{Statements and Declarations}

\paragraph{Competing interests.}
The author declares no financial or non-financial competing
interests relevant to the content of this article.

\paragraph{Data availability.}
Data sharing is not applicable to this article, as no datasets
were generated or analysed during the current study.

\end{document}